\newcommand{\ket}[1]{|#1\rangle}
\newcommand{\bra}[1]{\langle #1|}
\begin{document}

\title{Modifications of the Page Curve from Correlations within Hawking Radiation}

\author{Mishkat Al Alvi}
\email{alvi0016@d.umn.edu}
\affiliation{Department of Physics and Astronomy, University of Minnesota, Duluth, Minnesota 55812, USA}
\author{Mahbub Majumdar}
\email{where.is.mahbub@gmail.com, majumdar@bracu.ac.bd}
\affiliation{BRAC University, 66 Mohakhali, Dhaka 1212, Bangladesh}
\author{Md.\! Abdul Matin (deceased)}
\affiliation{Bangladesh University of Engineering \& Technology, Dhaka 1000, Bangladesh}
\author{Moinul Hossain Rahat} 
\email{mrahat@ufl.edu}
\affiliation{Institute for Fundamental Theory, Department of Physics, University of Florida, Gainesville, Florida 32611, USA}
\author{Avik Roy}
\email{aroy@utexas.edu}
\affiliation{Center for Particles and Fields, Department of Physics, University of Texas at Austin, Austin, Texas 78712, USA}

\begin{abstract}

We investigate quantum correlations between successive steps of black hole evaporation and investigate whether they might resolve the black hole information paradox. `Small' corrections in various models were shown to be unable to restore unitarity. We study a toy qubit model of evaporation that allows small quantum correlations between successive steps and reaffirm previous results. Then, we relax the `smallness' condition and find a nontrivial upper and lower bound on the entanglement entropy change during the evaporation process. This gives a quantitative measure of the size of the correction needed to restore unitarity. 
We find that these entanglement entropy bounds lead to a significant deviation from the expected Page curve. 

\keywords{Black hole information paradox, Toy Model Black Hole, Qubit Model black hole, Quantum correlations, Small correction}
\end{abstract}

\maketitle

\section{Introduction}


In 1972 Bekenstein audaciously associated a thermodynamic entropy to black holes that was proportional to the horizon area \cite{bek-1,bek-2}. This picture was elaborated in \cite{penr,christ,hawk-1}, and a complete description of classical black hole mechanics was presented in \cite{b-c-h, hawk-1}. 

The conflict between classical gravity that asserted that black holes don't emit radiation, and the thermodynamic picture whereby they have a temperature, was resolved by Hawking in 1975 \cite{hawk-2}. Using what are now standard techniques in curved space quantum field theory, Hawking showed that a black hole radiates as a black body with a temperature of $\frac{\kappa}{2\pi }$. 

Hawking suggested a heuristic picture whereby pair production occurred around the horizon. One particle fell into the black hole and another particle -- the Hawking radiation -- escaped to asymtopia. The picture was given a concrete realization in terms of tunneling by Parikh and Wilczek \cite{parikh}.  



However, this produced a new paradox. The ingoing and outgoing Hawking radiation from pair production would be entangled.   Thus, as the black hole evaporated, the entanglement entropy of the outgoing radiation would steadily increase. The emitted Hawking radiation at the end of the evaporation would then be entangled with ``nothing." Thus a black hole that began in a pure state would end in a mixed state violating unitarity. This is the {\em black hole information paradox.} 




Suppose it were possible to transfer the entanglement between the outgoing radiation and radiation/matter inside the black hole, to the outgoing radiation that came out late. Then all of the information in the black hole could be carried out by late time Hawking radiation and the final state would be a pure state  -- a pure state of early outgoing radiation entangled with late outgoing radiation.    

It was believed that one mechanism to realize this picture are `small' correlations between the Hawking quanta.  Many `small' effects might conceivably, collectively add up and allow for all of the information to come out \cite{presk}. Using strong subadditivity, Mathur showed that in a simple model of Hawking pairs being Bell pairs, that small correlations were unable to decrease the entanglement entropy enough to preserve unitarity \cite{peda}.  Later, Mathur showed that small correlations between consecutive, local, Hawking pair emissions -- such that a Hawking pair was correlated to the next emitted Hawking pair -- was likewise unable to reduce the entanglement entropy of the Hawking radiation to zero \cite{infall}.  In \cite{mathur-plumberg}, three models of non-local correlations among the Hawking radiation were considered and shown to not appreciably decrease the outgoing Hawking radiation entanglement entropy. Giddings in \cite{gidd-4} presented a nonlocal qubit model that was unitary. In \cite{avery}, Avery gave a framework to describe both unitary and non-unitary models of black hole evaporation and gave an example of how a non-unitary model could through a large deformation be made to be unitary. 
    
In order to ensure that the horizon has  {\it no drama}, it has been argued that corrections to Hawking pairs should be small.  In the presence of small corrections, the {\it niceness} conditions would still hold at the horizon which would enable the equivalence principle to hold at the horizon.  Large $\mathcal{O}(1)$ Hawking pair corrections were argued to be able to restore unitarity, but would remove the {\it no drama} nature of the horizon \cite{polchinski:bhip,marolf2017black,peda,infall}. 
    
In this paper, we continue the study of small correlations in restoring unitarity.  We provide a more general analysis than \cite{peda}.  We allow arbitrary, small correlations between Hawking pairs.  We show that such corrections do not restore unitarity. We next relax the `smallness' condition as prescribed in \cite{peda} and find a nontrivial upper and lower bound on the entanglement entropy change. This formulation allows us to quantify the kind of corrections required to restore unitarity. 

We find that our nontrivial bounds on the entanglement entropy give a deviation from the expected Page Curve \cite{Page-1,Page-2,Page-3}. This leads us to conclude that the corrections dictated by our qubit formalism are not compatible with physical unitary evolution. This result is in line with a more general result earlier proved in \cite{gidd-4} -- that corrections in the form of admixtures of Bell pair states alone,  are insufficient to restore unitarity. 



The organization of the paper is as follows. Section \ref{sec-2} introduces the leading order formulation of the black hole information paradox. In section \ref{sec-4}, we describe the framework often used to try to resolve the paradox using `small' corrections.  In section \ref{sec-5}, a simple toy model supporting the results of the previous section is briefly analyzed and shown to be unable to halt the monotonic growth of entanglement entropy. Section \ref{sec-6} generalizes the arguments in section \ref{sec-4} and finds upper and lower bounds on the change of entanglement entropy $\Delta S$ during the evaporation process when large correlations between Hawking quanta are allowed.  Section \ref{sec-7} shows how the upper and lower bounds on $\Delta S$ deduced in Section \ref{sec-6} produce a modified Page Curve. The modified Page Curve hints that Bell pair corrections are insufficient to make the evaporation process unitary.  Section \ref{sec-9} summarizes our findings.

\section{Leading Order Formulation of the Black Hole Information Paradox}\label{sec-2}

In this section we review the formulation of the information paradox as presented in \cite{peda} and that was followed by others. The purpose is to familiarize the reader with the framework we will use. 

We make certain assumptions that will allow us to use local effective field theory to handle the quantum gravity effects that produce Hawking radiation. This means that given a quantum state on a spacelike slice, we can evolve forward to future spacelike slices using Hamiltonian evolution. The assumptions are:


\begin{enumerate}

\item The black hole geometry can be foliated by spacelike slices continuous at the horizon  and the physics on these slices is given by a local quantum field theory \cite{peda, exactly}.

\item Suppose a collapsing shell with state $\ket{\Psi}_M$ produces the black hole. Time evolution of the hole, will push this matter far along spacelike slices inside the hole such that it is far from where Hawking radiation is being produced. Thus $\ket{\Psi}_M$ will at most, weakly affect the Hawking pairs.  Thus, after $N$ Hawking pairs are radiated the black hole + radiation state will be
\begin{equation}
  \ket{\Psi} \approx \ket{\Psi}_M \otimes\ket{\Psi}_1 \otimes \ket{\Psi}_2 \otimes \ldots \otimes \ket{\Psi}_N,
\label{N-pair}  
\end{equation}
where $| \Psi \rangle_i$ is the state of the $i$th Hawking pair. Since (\ref{N-pair}) has been written as a direct product of the black hole state $|\Psi \rangle_M$ and the Hawking quanta $|\Psi \rangle_i$, the Hawking radiation will in general not contain any information about the internal state of the black hole system $M$.


\item  The stretching of spacelike slices will cause creation/annihalation operators on different slices to be linearly related.  The states will be related by  \cite{hawk-3,gidd-1},
\begin{equation}\label{pair}
\ket{\Psi}_{pair} = Ce^{\beta c^{\dagger} b^{\dagger}}\ket{0}_c \ket{0}_b . 
\end{equation} 
where $\beta$ is a $c$-number and $c^\dagger$ and $b^\dagger$ are respectively, creation operators of the ingoing Hawking particle that is {\it captured} by the black hole, and the outgoing Hawking particle that {\it blasts out} of the black hole. Here $\ket{0}$ represents the vacuum state. 

To linear order, 
\begin{equation}\label{LO}
  \ket{\Psi}_{pair} = \frac{1}{\sqrt{2}}\left(\ket{0}_c\ket{0}_b + \ket{1}_c\ket{1}_b\right).
\end{equation}
We will use this Bell pair state as an approximation of Hawking radiation. The ingoing and outgoing $(b,c)$ particles are maximally entangled.  The entanglement entropy of the subsystem of the outgoing or subsystem of the ingoing radiation, for a single pair state (\ref{LO}) is  $S_{ent} = \log 2$, and for $N$ pairs is $S_{ent} = N\log 2$

\end{enumerate}
This monotonically increasing entanglement entropy lies at the heart of the black hole information paradox. Any solution would presumably include a mechanism to stop the growth of the entanglement entropy and reduce it to zero. 


Backreaction and quantum gravity effects will likely modify the Hawking pair state (\ref{LO}). As long as the effective field theory picture of the horizon physics holds, these corrections are expected to be \emph{small}.  

Small corrections can however build up and lead to large departures.  For example, consider perturbing the horizontally polarized photon pure state $\rho = \ket{\rightarrow}\bra{\rightarrow}$ and creating an admixture of with density matrix 
$\hat{\rho}' = (1 - \epsilon) \ket{\rightarrow}\bra{\rightarrow} + \epsilon \ket{\uparrow}\bra{\uparrow}$, where $\epsilon \ll 1$. The  entanglement entropy of of the perturbed state is small and is given by 
\begin{equation}
S(\hat{\rho}') = -\big(\left(1-\epsilon\right)\log\left(1-\epsilon\right) + \epsilon\log\epsilon \big),
\end{equation}

The \emph{fidelity}, given by $F(\rho_1,\rho_2) = \mathrm{Tr}{\sqrt{\rho_1^{1/2}\rho_2 \rho_1^{1/2}}}$, measures the closeness of two states.
There is high fidelity between $\hat{\rho}$ and $\hat{\rho}'$, since
$F\left(\hat{\rho}, \hat{\rho}'\right)= \sqrt{1 - \epsilon} \rightarrow 1$. However, the fidelity between $2N$ unperturbed photons described by $\hat{\rho}^{\otimes N}$, and $2N$ \emph{perturbed} photons described by $\hat{\rho}'^{\otimes N}$ will become small as $N$ becomes large. The two states look less and less like each other, since 
$F\left(\hat{\rho}^{\otimes N}, \hat{\rho}'^{\otimes N}\right) = \left(1-\epsilon\right)^{N/2} \rightarrow 0$.

Thus, small corrections to the the Hawking state that arise from effects ranging from energy conservation \cite{parikh,parikh2,tunnel} to thermal corrections \cite{zhang1}, can produce very different final states.  It has been hoped that such corrections can lead to information leakage from a black hole. 


\section{Small Corrections to the Hawking State - Mathur's Bound} \label{sec-4}

Hawking particles cannot carry information out of the black hole if the Hawking pairs are always in the same state -- such as in (\ref{LO}).  We therefore consider a space of Hawking states, with dimensionality larger than one.  Binary sequences are a popular way to encode information.  We therefore consider a two dimensional space of Hawking states spanned by $V_2 = \{ S^{(1)}, S^{(2)} \}$ where for the $(n+1)$th emission 
\begin{eqnarray}
S^{(1)} & = & \frac{1}{\sqrt{2}} \Big(|0\rangle_{c_{n+1}} |0\rangle_{b_{n+1}} + |1\rangle_{c_{n+1}} |1\rangle_{b_{n+1}} \Big) \nonumber \\
S^{(2)} & = & \frac{1}{\sqrt{2}} \Big(|0\rangle_{c_{n+1}} |0\rangle_{b_{n+1}} -  |1\rangle_{c_{n+1}} |1\rangle_{b_{n+1}} \Big).
\label{2dspace}
\end{eqnarray}
A larger space such as $V_4 = \{S^{(1)}, S^{(2)}, S^{(3)}, S^{(4)} \}$ where  $S^{(3)} = |0\rangle_{c_{n+1}} |1\rangle_{b_{n+1}}$ and $S^{(4)} = |1\rangle_{c_{n+1}} |0\rangle_{b_{n+1}}$ could have been chosen as was done in \cite{infall}. However, for our purposes, sequences from $(V_2)^{\otimes N}$ possess enough complexity to encode the information in a black hole. 

The complete system consists of the matter $M$ inside a black hole, incoming Hawking particles $c_i$, and outgoing Hawking particles  $b_i$. Given a basis $|\psi_i\rangle$ for the $\{M, c \}$ system of black hole matter and ingoing radiation, and a basis $|\chi_i\rangle$ for the $b_i$ quanta, the full state upon diagonalization is  
\begin{equation}
|\Psi_{M,c}, \psi_b(t_n)\rangle = \sum_{i}C_{i} | \psi_i \rangle  |\chi_i \rangle.
\end{equation}

At the next time-step of evolution, the $b_i$ quanta move a distance $\mathcal{O}(M^{-1})$ from the hole. The next Hawking particle is emitted a time of order $\mathcal{O}(M^{-1})$ after the previous emission. Thus, the $b_i$ and $\{ b_{i+1}, c_{i+1}\}$ systems will to a good approximation be causally disconnected.   We therefore assume that the $b_i$ particle is not affected by the $(i+1)$th, and later emissions.  

The $(i+1)$th emission will change the black hole states to 
\begin{equation}
| \psi_i \rangle  \longrightarrow S^{(1)} | \psi_i^{(1)} \rangle + S^{(2)} |\psi_i^{(2)} \rangle, \quad \textrm{where} \quad \|| \psi_i^{(1)} \rangle \|^2 + \| |\psi_i^{(2)}\rangle \|^2 = 1.
\label{correlations}
\end{equation}

In (\ref{correlations}) the Hawking pair state is entangled with the black hole, in contrast to (\ref{N-pair}). This enables correlations between the Hawking quanta, the black hole state, and previously emitted Hawking $c$ quanta to exist. It has been hoped that such correlations will allow information to be carried out by Hawking quanta.   (In the leading order analysis a new Hawking pair is in the state $S^{(1)}$ such that $|\psi_i^{(1)}\rangle  = |\psi_i \rangle $ and $|\psi_i^{(2)}\rangle  = 0$, and there are no correlations.)

The complete state after the $(n+1)$th emission is, 
\begin{equation}
|\Psi_{M,c}, \psi_b(t_{n+1})\rangle 
= 
\sum_i C_i\Big[S^{(1)} |\psi_i^{(1)} \rangle + S^{(2)} | \psi_i^{(2)} \rangle \Big] |\chi_i \rangle
= \Lambda^{(1)}S^{(1)} + \Lambda^{(2)}S^{(2)},
\label{tensorprod}
\end{equation}
\noindent  where,  
\begin{equation}
\Lambda^{(1)}=\sum_i C_i | \psi_i^{(1)} \rangle | \chi_i \rangle ;  \quad \Lambda^{(2)}=\sum_i C_i | \psi_i^{(2)} \rangle | \chi_i \rangle.
\label{lambda}
\end{equation}

In (\ref{tensorprod}) we have tensored with $| \chi_i \rangle$, the state of the previously emitted $b$ quanta. This is because we have assumed that new $b$ emissions don't affect previously emitted $b$ quanta. 

\textcolor{black}{The $\{b_1,\ldots, b_n\}$ quanta are not affected by later emissions. Thus, their entanglement entropy, $S(\{b_1,\ldots, b_n\})  
        = \sum_i |C_i|^2 \log |C_i|^2$,  stays the same at time-step $t_{n+1}$.}

The entanglement entropy of the $(b_{n+1},c_{n+1})$ pair with the rest of the system is given by the density matrix of the $(b_{n+1},c_{n+1})$ system,
\begin{equation}
\hat{\rho}_{b_{n+1},c_{n+1}} = \left[
    \begin{array}{ccc}
    \langle\Lambda^{(1)}|\Lambda^{(1)}\rangle & \langle\Lambda^{(1)}|\Lambda^{(2)}\rangle\\
    \langle\Lambda^{(2)}|\Lambda^{(1)}\rangle & \langle\Lambda^{(2)}|\Lambda^{(2)}\rangle
    \end{array}
    \right] \quad \textrm{where} \quad \|\Lambda^{(1)}\|^2 + \|\Lambda^{(2)}\|^2 = 1
\end{equation}

Mathur defined a modification to the Hawking state to be `small' if \cite{peda}
\begin{equation}
\|\Lambda^{(2)}\|^2 <\epsilon \quad \textrm{where}\quad \epsilon \ll 1.
\end{equation}
In this framework of small corrections, Hawking pairs are predominantly produced in the $S^{(1)}$ state and are rarely produced in the $S^{(2)}$ state. Mathur showed that the entanglement entropy at each time-step increases by at least $\log 2 - 2\epsilon$, which is positive for small $\epsilon$,  \cite{peda}. Models with correlations between consecutive emissions \cite{infall} and other toy models as in \cite{mathur-plumberg}, have echoed the inability of small corrections in decreasing the entanglement entropy increase at each time step.

In the next section we shall illustrate Mathur's bound in a toy model that incorporates a more general correlation compared to the model presented in \cite{infall}.

\section{A Simple Toy Model with Arbitrary Corrections} \label{sec-5}

We now consider a model that allows all previously radiated quanta to modify the state of a new Hawking pair. Let the first Hawking pair emitted be a Bell pair state
\begin{equation}
\ket{\Psi_1} = \frac{1}{\sqrt{2}} \left( \ket{0}_b \ket{0}_c + \ket{1}_b \ket{1}_c \right)
\end{equation}
We assume that the Hilbert space of the newly evolved pair is spanned by  $\{\ket{00}, \ket{11} \}$. The  $n$ pair state is given by
\begin{equation}
\ket{\Psi_n} = \sum_{i=0}^{2^n - 1} a_i \ket{i}_b \ket{i}_c \quad \textrm{where} \quad \sum_{i=0}^{2^n - 1} |a_i|^2 = 1
\end{equation}
Here $\ket{i}_c$  and $\ket{i}_b$ denote the states of the $n$ ingoing and $n$ outgoing quanta respectively, and $i$ is the  $n$ bit representation of the integer $i$.  We build the state of the $(n+1)$th pair from the previous $n$ quanta and the $(n+1)$th quanta as the entangled state,
\begin{equation}
\ket{\Psi_{n+1}} = \sum_{i=0}^{2^n - 1} a_i \ket{i}_b \ket{i}_c \otimes \frac{1}{\sqrt{2}} \Big( e^{s_{i,n,0}} \ket{0}_b \ket{0}_c + e^{s_{i,n,1}} \ket{1}_b \ket{1}_c \Big)
\end{equation}
The term $\frac{1}{\sqrt{2}}e^{s_{i,n,j}}$ is the amplitude to observe the new pair in the state $\ket{j}_b \ket{j}_c$, given that the earlier pairs were given by $\ket{i}_b \ket{i}_c$. If the modification to the Bell state for the $(n+1)$th pair is small, then $\left| s_{i,n,j} \right|$ is a small positive number.  Normalization requires, 
\begin{equation}\label{norm}
\sum_{j=0}^{1} e^{2s_{i,n,j}} = 2.
\end{equation}
The entanglement entropy of the $n+1$ radiated quanta is hence given by (where $\log$ is taken have base $e$),
\begin{align}
S(n+1) &= - \sum_{i} \sum_{j} \left( \frac{a_i e^{s_{i,n,j}}}{\sqrt{2}} \right)^2 \log \left( \frac{a_i e^{s_{i,n,j}}}{\sqrt{2}} \right)^2 \nonumber \\
& = - \sum_{i} \sum_{j} a_i^2 e^{2s_{i,n,j}} \left( \log a_i + s_{i,n,j} - \frac{1}{2}\log 2 \right) \nonumber \\
& = -\sum_i a_i^2 \log a_i^2 + \log 2 - \sum_i a_i^2 \sum_j s_{i,n,j} e^{2s_{i,n,j}}.
\label{n+1}
\end{align}

The first term in (\ref{n+1}) is the entanglement entropy of the first $n$ quanta.  Upon defining $\Delta S = S(n+1) - S(n)$, we find that
\begin{equation}\label{ds}
\Delta S = \log 2 - \sum_i a_i^2 \sum_j s_{i,n,j} e^{2s_{i,n,j}}.
\end{equation}
We can find an upper and lower bound on $\Delta S$. Consider the quantity 
\begin{equation}
f_{i,n} = \sum_j s_{i,n,j} e^{2s_{i,n,j}} = s_{i,n,0}e^{2s_{i,n,0}} + s_{i,n,1} e^{2s_{i,n,1}},
\end{equation}
where $s_{i,n,0}, s_{i,n,1} \in \mathbb{R} $. Using Lagrange Multipliers we can easily see that $f_{i,n} \ge 0$ with equality at $s_{i,n,0} = s_{i,n,1}=0$. The normalization (\ref{norm}) requires that $s_{i,n,0} \le \frac{1}{2} \log 2 $ and $s_{i,n,1} \le \frac{1}{2} \log 2 $. Therefore, using (\ref{norm}) again, we find that $f_{i,n}  \le 2 \cdot \frac{1}{2} \log 2  = \log 2$.

We therefore find that,
\begin{equation}
0 \le \Delta S \le \log 2.
\end{equation}

Our calculation therefore gives a lower bound to $\Delta S$. This improves Mathur's lower bound of $\Delta S \ge \log 2 - 2 \epsilon$ which is positive only if $\epsilon$ is small.  Our calculation shows that even large corrections in this model cannot lead to entropy decrease.  This leads us to the more general model in the next section.


\section{Generalization of Mathur's Bound} \label{sec-6}
In this section we generalize Mathur's bound. Some of this work was presented in a preliminary form in \cite{icece}. We allow corrections of arbitrary magnitude to the leading order analysis. Although, arbitrary corrections may destroy the \emph{Solar System Limit} as described by Mathur, and expectations of no-drama at the horizon, our results establish interesting and nontrivial strong upper and lower bounds on $\Delta S$, which must be obeyed even if $\epsilon \sim 1$.  

To facilitate the derivation, we first derive two lemmas. Let us introduce the correction parameters,
\begin{eqnarray}
\langle\Lambda^{(2)}|\Lambda^{(2)}\rangle = \epsilon^2, 
&\quad & 
\langle\Lambda^{(1)}|\Lambda^{(1)}\rangle = 1-\epsilon^2 \nonumber \\ 
\langle\Lambda^{(1)}|\Lambda^{(2)}\rangle = \langle\Lambda^{(2)}|
\Lambda^{(1)}\rangle = \epsilon_2,
&\quad &
\gamma^2 = 1-4\Big(\epsilon^2(1-\epsilon^2)-\epsilon_2^2\Big).
\label{eqn:con3}
\end{eqnarray}
We define a correction to be small if $|\epsilon| \ll 1$.  Although $\epsilon_2$ is generally complex and $\langle\Lambda^{(1)}|\Lambda^{(2)}\rangle = \langle\Lambda^{(2)}|\Lambda^{(1)}\rangle^{*}$, for simplicity in (\ref{eqn:con3}) we assume $\epsilon_2$ to be real. Complexifying it gives the same result.

\newtheorem{lone}{Lemma}
\begin{lone}
The entanglement entropy of the newly created pair is given by \[S(p)\leq\sqrt{1-\gamma^2}\log2.\]
\end{lone}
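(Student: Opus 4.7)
The plan is to compute the spectrum of the $2{\times}2$ reduced density matrix $\hat{\rho}_{b_{n+1},c_{n+1}}$, express $S(p)$ as a function of $\gamma$ alone, and then reduce the claimed bound to a one-variable calculus inequality.

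First I would read off the entries of $\hat{\rho}$ from the definitions in (\ref{eqn:con3}): the matrix has trace $1$ and determinant
\[
\det\hat{\rho} = \epsilon^2(1-\epsilon^2)-\epsilon_2^2 = \tfrac{1}{4}(1-\gamma^2).
\]
Its eigenvalues are therefore $\lambda_{\pm} = (1\pm\gamma)/2$, and positivity of $\hat{\rho}$ forces $\gamma\in[0,1]$ (we may take $\gamma\ge 0$ by the $\gamma\to-\gamma$ symmetry of the spectrum). The von Neumann entropy collapses to the binary form
\[
S(p) = -\frac{1+\gamma}{2}\log\frac{1+\gamma}{2}-\frac{1-\gamma}{2}\log\frac{1-\gamma}{2},
\]
so the lemma reduces to proving $F(\gamma) := \sqrt{1-\gamma^2}\,\log 2 - S(p) \ge 0$ on $[0,1]$.

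Next I would check boundary values and compute derivatives. A direct calculation gives $F(0)=F(1)=0$, $F'(0)=0$, and
\[
F''(\gamma) = \frac{\sqrt{1-\gamma^2}-\log 2}{(1-\gamma^2)^{3/2}},
\]
which vanishes exactly once on $(0,1)$, at $\gamma_{*}=\sqrt{1-\log^{2}2}$. Hence $F$ is strictly convex on $[0,\gamma_{*}]$ and strictly concave on $[\gamma_{*},1]$.

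Finally, I would glue the two regimes together. On the convex interval, $F(0)=F'(0)=0$ together with $F''>0$ gives $F(\gamma)\ge 0$, with strict inequality for $\gamma\in(0,\gamma_{*}]$; in particular $F(\gamma_{*})>0$. On the concave interval, $F$ dominates every chord, and the chord from $(\gamma_{*},F(\gamma_{*}))$ to $(1,0)$ is non-negative on $[\gamma_{*},1]$, so $F(\gamma)\ge 0$ there as well. The main obstacle is precisely this change of convexity at $\gamma_{*}$: the inequality is not a single-sign Jensen argument but requires a split-domain analysis that combines convexity at the origin with concavity near $\gamma=1$ to conclude global non-negativity of $F$.
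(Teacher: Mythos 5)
Your proposal is correct and follows essentially the same route as the paper: diagonalize the $2\times 2$ reduced density matrix to get eigenvalues $(1\pm\gamma)/2$ and reduce the claim to the one-variable inequality the paper records as (\ref{label14}). The only difference is that the paper asserts that inequality with ``it can be shown,'' whereas you actually prove its upper-bound half via the convexity/concavity split at $\gamma_{*}=\sqrt{1-\log^{2}2}$; your computation of $F''$ and the chord argument on the concave piece check out.
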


\begin{proof}
The reduced density matrix for the pair is
\begin{equation}
\hat{\rho}_p = \left[
    \begin{array}{ccc}
    1-\epsilon^2 & \epsilon_2\\
    \epsilon_2 & \epsilon^2
    \end{array}
    \right].
\end{equation}
The eigenvalues of this matrix are: $\lambda_1=\frac{1+\gamma}{2}$ and $\lambda_2=\frac{1-\gamma}{2}$, where $\gamma = \sqrt{1-4(\epsilon^2(1-\epsilon^2) - \epsilon_2^2)}$. Thus $ 0 \le \gamma \le 1$. This implies that 
\begin{equation}
    0 \le \epsilon^2(1-\epsilon^2) - \epsilon_2^2 \le \frac{1}{4}.
\label{gamma:bound}
\end{equation}
Hence, the entanglement entropy of the pair is
\begin{eqnarray}
S(p) & = & -\mathrm{tr} \hat{\rho}_p \log \hat{\rho}_p = -\sum_{i=1}^2 \lambda_i \log \lambda_i \nonumber \\
    & = & \log 2 - \frac{1}{2}[(1+\gamma)\log(1+\gamma)+ (1-\gamma)\log(1-\gamma)].
\end{eqnarray}

It can be shown that for $0 \leq x \leq 1$,
\begin{equation}\label{label14}
(1-x^2)\log2 \leq \log2 - \frac{1}{2}[(1+x)\log(1+x)
+(1-x)\log(1-x)] \leq \sqrt{1-x^2}\log2.
\end{equation}
The result follows from (\ref{label14}).
\end{proof}

\newtheorem{lthree}[lone]{Lemma}
\begin{lthree}
\begin{equation}
(1-4\epsilon_2^2)\log 2 \leq S(b_{n+1})=S(c_{n+1})\!\leq\! \sqrt{1-4\epsilon_2^2}\log2 \nonumber
\end{equation}
\end{lthree}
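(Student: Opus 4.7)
My plan is to compute the reduced density matrix $\hat{\rho}_{b_{n+1}}$ directly and then invoke the same scalar inequality (\ref{label14}) that drove Lemma~1. The first step is to rewrite the full state $|\Psi\rangle = \Lambda^{(1)} S^{(1)} + \Lambda^{(2)} S^{(2)}$ in the computational basis $\{|00\rangle,|11\rangle\}$ of the new pair $(c_{n+1}, b_{n+1})$. Using the definitions in (\ref{2dspace}), this gives
\begin{equation}
|\Psi\rangle = |\alpha\rangle \otimes |0\rangle_{c_{n+1}}|0\rangle_{b_{n+1}} + |\beta\rangle \otimes |1\rangle_{c_{n+1}}|1\rangle_{b_{n+1}},
\end{equation}
with $|\alpha\rangle = \tfrac{1}{\sqrt{2}}(\Lambda^{(1)}+\Lambda^{(2)})$ and $|\beta\rangle = \tfrac{1}{\sqrt{2}}(\Lambda^{(1)}-\Lambda^{(2)})$. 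This form makes it manifest that the pair Hilbert space is supported only on $|00\rangle$ and $|11\rangle$.

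Next I would trace $|\Psi\rangle\langle\Psi|$ over the complement of $b_{n+1}$. The cross terms are proportional to $\mathrm{tr}_{c_{n+1}}|0\rangle\langle 1|_{c_{n+1}} = 0$, so they drop out, leaving the diagonal matrix $\hat{\rho}_{b_{n+1}} = \langle\alpha|\alpha\rangle |0\rangle\langle 0| + \langle\beta|\beta\rangle |1\rangle\langle 1|$. Using the correction parameters (\ref{eqn:con3}), the expansion $\langle\alpha|\alpha\rangle = \tfrac{1}{2}[(1-\epsilon^2)+\epsilon^2+2\epsilon_2] = \tfrac{1+2\epsilon_2}{2}$ and similarly $\langle\beta|\beta\rangle = \tfrac{1-2\epsilon_2}{2}$ gives the two eigenvalues. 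The entanglement entropy is therefore
\begin{equation}
S(b_{n+1}) = \log 2 - \tfrac{1}{2}\bigl[(1+2\epsilon_2)\log(1+2\epsilon_2) + (1-2\epsilon_2)\log(1-2\epsilon_2)\bigr],
\end{equation}
and substituting $x = 2\epsilon_2$ into the inequality (\ref{label14}) immediately produces the claimed bound $(1-4\epsilon_2^2)\log 2 \le S(b_{n+1}) \le \sqrt{1-4\epsilon_2^2}\log 2$. The constraint (\ref{gamma:bound}) ensures $\epsilon_2^2 \le 1/4$, so $x \in [-1,1]$ and the inequality is applicable.

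For the equality $S(b_{n+1}) = S(c_{n+1})$, I would note that both $S^{(1)}$ and $S^{(2)}$ are invariant under the swap $b_{n+1} \leftrightarrow c_{n+1}$, so repeating the same partial-trace computation with the roles of $b_{n+1}$ and $c_{n+1}$ exchanged yields an identical diagonal reduced density matrix and hence identical entropy. The main subtle step is the vanishing of the off-diagonal elements of $\hat{\rho}_{b_{n+1}}$ after the partial trace; once that is established, the remainder is algebraic and piggybacks entirely on the machinery already assembled for Lemma~1.
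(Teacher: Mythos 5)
Your proposal is correct and follows essentially the same route as the paper: rewrite the state in the $\{|00\rangle, |11\rangle\}$ basis of the new pair with amplitudes $\tfrac{1}{\sqrt 2}(\Lambda^{(1)}\pm\Lambda^{(2)})$, obtain the diagonal reduced density matrix with eigenvalues $\tfrac{1\pm 2\epsilon_2}{2}$, and apply inequality (\ref{label14}) with $x = 2\epsilon_2$. The only (harmless) additions are your explicit justification that the off-diagonal terms vanish under the partial trace and the swap-symmetry remark for $S(b_{n+1}) = S(c_{n+1})$, both of which the paper leaves implicit.
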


\begin{proof}
The complete state of the system after the creation of $n+1$ pairs is
\begin{eqnarray}
|\Psi_{M,c}, \psi_b(t_{n+1})\rangle  &=& \Big[|0\rangle_{c_{n+1}}|0\rangle_{b_{n+1}}\frac{1}{\sqrt{2}}(\Lambda^{(1)}+\Lambda^{(2)})\Big]\nonumber\\
&&+\Big[|1\rangle_{c_{n+1}}|1\rangle_{b_{n+1}}\frac{1}{\sqrt{2}}\!(\!\Lambda^{(1)}-\Lambda^{(2)})\Big],
\end{eqnarray}
\noindent where  $\Lambda^{(1)}$ and  $\Lambda^{(2)}$ reflect the state of the black hole and are defined by  (\ref{lambda}).

Now, the reduced density matrix of the $c_{n+1}$ or $b_{n+1}$ quanta is
\begin{eqnarray}
\hat{\rho}_{b_{n+1}} &=& \hat{\rho}_{c_{n+1}}\nonumber \\
&=&\left[\begin{array}{cc}
\frac{1}{2}\langle(\Lambda^{(1)}\!+\Lambda^{(2)})|(\Lambda^{(1)}+\Lambda^{(2)})\rangle & 0 \\
0 & \frac{1}{2}\langle(\Lambda^{(1)}-\Lambda^{(2)})|(\Lambda^{(1)}-\Lambda^{(2)})\rangle
\end{array}\right]\nonumber\\
&=& \left[\begin{array}{cc}
\frac{1+2\epsilon_2}{2} & 0\\
0 & \frac{1-2\epsilon_2}{2}
\end{array}\right].
\end{eqnarray}
Then, entanglement entropy of the $c_{n+1}$ or $b_{n+1}$ quanta is
\begin{eqnarray}
S(b_{n+1})=S(c_{n+1}) &=& \log2 - \frac{1+2\epsilon_2}{2}\log(1+2\epsilon_2)\nonumber\\
        &&-\: \frac{1-2\epsilon_2}{2}\log(1-2\epsilon_2).
\end{eqnarray}
The result follows directly from (\ref{label14}).
\end{proof}

Using these lemmas we now prove the following theorem.
\vspace{2 mm}
\newtheorem{nthm}{Theorem}
\begin{nthm}
The change in the entanglement entropy from time-step $t_n$ to $t_{n+1}$ is restricted by the following bound:
\begin{equation}\label{label10}
1-4\epsilon_2^2-\sqrt{1-\gamma^2} \leq \frac{\Delta S}{\log 2} \leq \sqrt{1-4\epsilon_2^2}
\end{equation}
where $\Delta S = S(b_{n+1}, \{b\})-S(\{b\})$.
\label{theorem-bounds}
\end{nthm}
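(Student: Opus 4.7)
The plan is to obtain the two bounds separately by combining Lemmas~1 and 2 with standard entropy inequalities applied to the natural tripartition $(\{b\}, b_{n+1}, c_{n+1})$ of the post-emission state, where $\{b\} \equiv \{b_1,\ldots,b_n\}$ is the previously emitted outgoing radiation and everything else ($M$ together with $\{c\}$) serves merely to purify the state. The starting observation, already made in the paragraph preceding the theorem, is that $S(\{b\})$ is not modified by the $(n+1)$th emission because the $\{b_i\}$ are causally disconnected from later dynamics. So $\Delta S = S(b_{n+1},\{b\}) - S(\{b\})$ genuinely measures the entropy contribution of the newly emitted outgoing quantum.

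For the upper bound I would invoke ordinary subadditivity on $\{b\}$ and $b_{n+1}$: $S(b_{n+1},\{b\}) \leq S(\{b\}) + S(b_{n+1})$. Subtracting $S(\{b\})$ immediately yields $\Delta S \leq S(b_{n+1})$, and Lemma~2 upgrades this to $\Delta S/\log 2 \leq \sqrt{1-4\epsilon_2^2}$.

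For the lower bound I plan to use \emph{weak monotonicity}, the dual form of strong subadditivity obtained by purification, which reads $S(X) + S(Z) \leq S(XY) + S(YZ)$ for any three disjoint subsystems $X,Y,Z$. Setting $X = \{b\}$, $Y = b_{n+1}$, $Z = c_{n+1}$ gives $S(\{b\}) + S(c_{n+1}) \leq S(b_{n+1},\{b\}) + S(p)$, where $p = (b_{n+1}, c_{n+1})$ denotes the new pair. Rearranging produces $\Delta S \geq S(c_{n+1}) - S(p)$, and using $S(c_{n+1}) = S(b_{n+1})$ (established en route to Lemma~2) together with the lower bound $S(b_{n+1}) \geq (1-4\epsilon_2^2)\log 2$ from Lemma~2 and the upper bound $S(p) \leq \sqrt{1-\gamma^2}\log 2$ from Lemma~1 delivers $\Delta S/\log 2 \geq 1 - 4\epsilon_2^2 - \sqrt{1-\gamma^2}$.

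The main conceptual obstacle is choosing the correct orientation of the weak-monotonicity inequality so that the two lemmas plug into the right sides. Put another way, one is really observing that $\Delta S = S(b_{n+1}) - I(\{b\}:b_{n+1})$ and then using the fact that $b_{n+1}$ is already heavily correlated with its own partner $c_{n+1}$---as encoded in $S(p)$---to cap how much correlation is left over for sharing with the older radiation $\{b\}$. Once that input is identified, no new density-matrix calculation is needed: all the technical content already sits in the two $2\times 2$ diagonalizations carried out in Lemmas~1 and 2, and the rest is arithmetic.
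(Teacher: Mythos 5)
Your proposal is correct and follows essentially the same route as the paper: subadditivity of $S(\{b\})$ and $S(b_{n+1})$ plus Lemma~2 for the upper bound, and the inequality $S(\{b\})+S(c_{n+1}) \leq S(\{b\},b_{n+1})+S(b_{n+1},c_{n+1})$ (which the paper labels strong subadditivity and you correctly identify as its weak-monotonicity/purified form) combined with Lemmas~1 and~2 for the lower bound. The only difference is nomenclature, not substance.
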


\vspace{1 mm}
\begin{proof}
Let us assume $A=\{b\}$ and $B=b_{n+1}$ and $C=c_{n+1}$. Using the \emph{strong subadditivity inequality}, $S(A)+S(C) \leq S(A,B)+S(B,C)$
\noindent
we find that,
\begin{eqnarray} \label{label8}
S(\{b\})+S(c_{n+1}) & \leq & S(\{b\},b_{n+1})+S(b_{n+1},c_{n+1}) \nonumber \\
\Rightarrow \Delta S 
& \geq &  (1-4\epsilon_2^2)\log 2 - \sqrt{1-\gamma^2}\log 2.
\end{eqnarray}
\noindent
Inequality (\ref{label8}) follows from using Lemma 1 and Lemma 2.

Now using the \emph{subadditivity inequality}, $S(A)+S(B) \geq S(A,B)$, we find,

\begin{eqnarray} \label{label9}
S(\{b\})+S(b_{n+1})  & \geq &  S(\{b\},b_{n+1}) \nonumber\\
\Rightarrow \Delta S & \leq & \sqrt{1-4\epsilon_2^2}\log 2
\end{eqnarray}
\noindent Inequality (\ref{label9}) follows from Lemma 2 and combining (\ref{label8}) and (\ref{label9}) gives Theorem \ref{theorem-bounds}.
\end{proof}
\vspace{1 mm}

The change in the entanglement entropy upper and lower bounds in Theorem \ref{theorem-bounds} hold irrespective of the size of the correction $\epsilon$. They hold in particular for $\mathcal{O}(1)$ corrections. In the given effective field theory framework, they thus give an upper bound for the change in entanglement entropy for non-trivial corrections and can help us understand the size of the correction needed to restore unitarity.



\subsection{Nontriviality}
When $\epsilon \rightarrow 0$, the Hawking radiation is always in the same state,  $S^{(1)}$. Thus the Hawking radiation can't carry out information from the black hole.  This is reflected by the upper and lower bounds in (\ref{label10}) being the same and thus the entropy increase is maximal. The case of $\epsilon \rightarrow 1$ is similar. This time all of the Hawking radiation is in the state $S^{(2)}$ and the entropy increase given by (\ref{label10}) is thus maximal.  Thus (\ref{label10}) gives a non-trivial bound for $\epsilon \neq 0$ and $\epsilon \neq 1$, particularly for large corrections.  The analysis in \cite{peda} and the various generalizations reviewed in \cite{avery} do not address the case of large corrections.  

Consider the maximum difference between the lower and upper bound in (\ref{label10}).  Denoting this quantity by $D_{\Delta S}$, we find
\begin{equation}\label{dds}
D_{\Delta S} = \log 2 \left(\sqrt{1-4\epsilon_2^2} + \sqrt{4\left(\epsilon^2(1-\epsilon^2) - \epsilon_2^2\right)} - (1-4\epsilon_2^2)\right)
\end{equation}
It is straightforward to show that for any value of $\epsilon_2$, $D_{\Delta S}$ is maximized when $\epsilon^2(1-\epsilon^2)$ is maximal, i.e. $\frac{1}{4}$. Then (\ref{dds}) reduces to
\begin{equation}
\frac{D_{\Delta S}}{\log 2} = 2\sqrt{1-4\epsilon_2^2} - (1-4\epsilon_2^2)
\end{equation}
It is straightforward to show that maximization of $D_{\Delta S}$ requires $\frac{dD_{\Delta S}}{d\epsilon_2} = 0 \Rightarrow \epsilon_2 = 0$. This can also be seen from (\ref{gamma:bound}). This gives,
\begin{equation}
\frac{D_{\Delta S}}{\log 2} \le 1.
\end{equation}
Thus $D_{\Delta S}$ never exceeds $\log 2$. While it is clear that that $\Delta S \le \log 2$, the result that $D_{\Delta S} \le \log 2$ is a different nontrivial bound.



\subsection{Small corrections are not enough}
In the beginning of the  black hole evaporation process the entanglement entropy of the outgoing pairs will be significant because the outgoing radiation will carry little information about the hole. If the evaporation is unitary then as more Hawking particles are emitted, more and more information about the hole will come out and the entanglement entropy will decrease \cite{Page-1}. Once all of the information has come out, the entanglement entropy will be zero as the $\{b \}$ system will be a pure state.  

This means that at some point the lower bound in (\ref{label10}) must become negative. The occurs when, 
    \begin{equation}\label{label11}
    4\epsilon_2^2 (1-4\epsilon_2^2) > 1-4\epsilon^2 (1-\epsilon^2).
    \end{equation}
The maximum of the left hand side of (\ref{label11}) is $\frac{1}{4}$. This  implies that
\begin{equation}
1-4\epsilon^2 (1-\epsilon^2) <  \frac{1}{4} \quad \Longrightarrow \quad \frac{1}{2}<\epsilon < \frac{\sqrt{3}}{2}.
\label{condn}
\end{equation}
This gives a necessary (but not sufficient) condition for unitarity. Equation (\ref{condn}) implies that relatively large corrections are needed. The bound in (\ref{label10}) does not require the correction parameters to be small and thus it remains of interest even in the large correction regime given by (\ref{condn}). This is a unique feature of the derivation presented in this paper and will contribute to the results in the following section.

\section{Incompatibility of Bell Pair State Correction to Page Curves}\label{sec-7}

In \cite{Page-1}, Page showed that the a small subsystem like the outgoing radiation emitted by a black hole at the beginning of evaporation \cite{Page-2}, is maximally entangled with the larger system.  Mathematical proofs were provided in \cite{Pageproof-1, Pageproof-2,Pageproof-3}. 

Suppose the size of the smaller subsystem's Hilbert space is $m$ and the larger subsystem's Hilbert space dimension is $n$, such that the combined system has dimension  $mn$. Suppose the combined system is described by a random pure state. Page showed that in this case the entropy $S_{m,n}$, of the smaller subsystem is
\begin{equation}
S_{m,n} = \log m - \frac{m}{2n} + \mathcal{O}\left(\left(m/n\right)^2\right)
\end{equation}
Thus for $n\gg m$, $S_{m,n}$ is the very near the course-grained entropy of the outgoing radiation, which is $\log{m}$. Also, the outgoing radiation is thermal .  

As the outgoing radiation entropy rises, the size of the outgoing radiation subsystem $m$, increases.  This means that information will leak out of the black hole with increasing time.   At some point enough information will have leaked out such that $m \approx n$ and $S_{m.n}$ will decrease.  Eventually, we will have $m \gg n$. The analysis will then be similar to the $n \gg m$ case.  Broadly speaking, the $m \gg n$ and $n \gg m$ cases describe similar systems, but with $m$ and $n$ interchanged.  The fully symmetric case for the Page Curve is shown in Figure \ref{fig1}a. Note, the turnover point, or Page Time, occurs when half of the black hole has evaporated. 


\begin{figure}[h]
\centering
\begin{tabular}{ccc} 
      \includegraphics[width=0.33\textwidth]{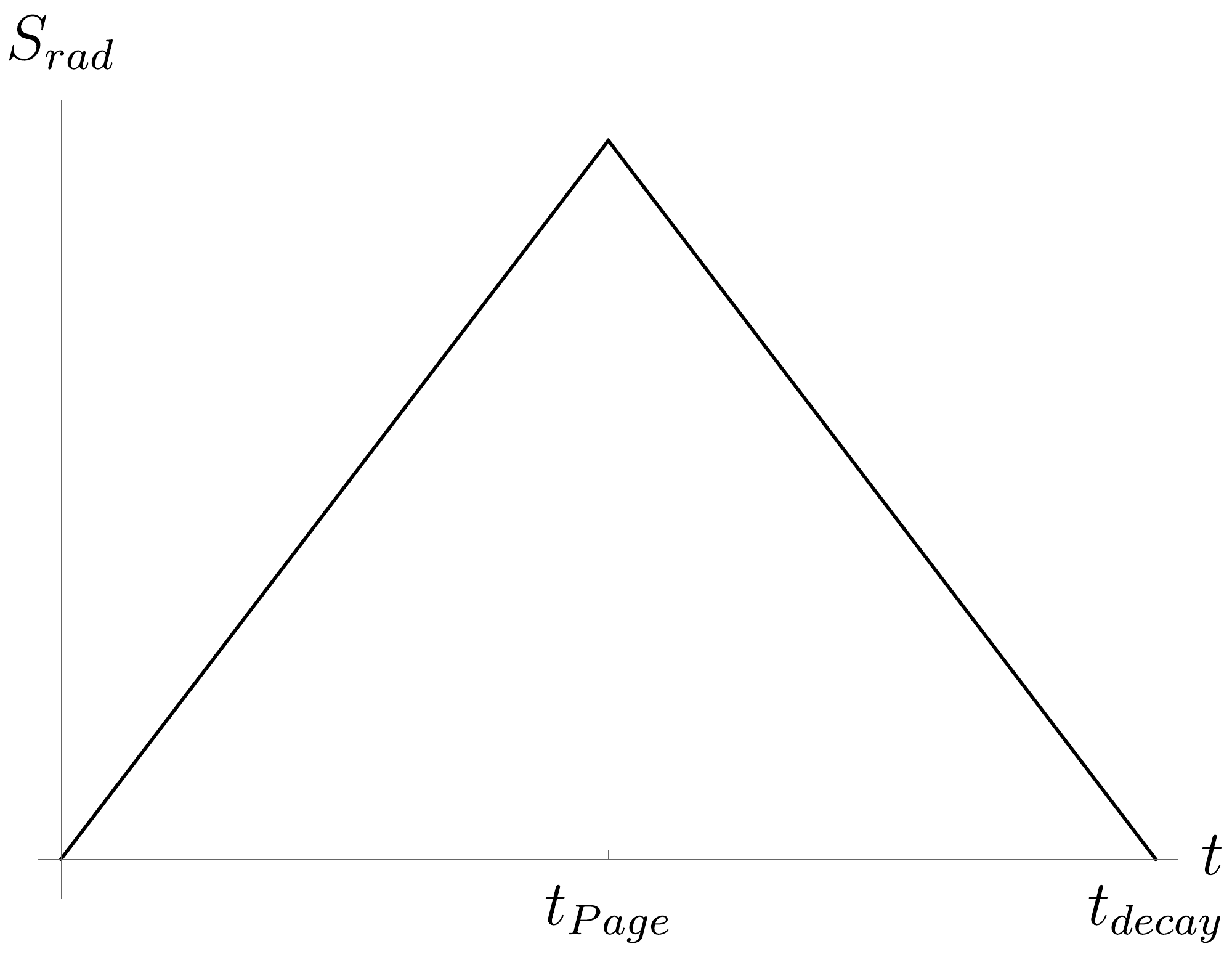}
      &
      \includegraphics[width=0.33\textwidth]{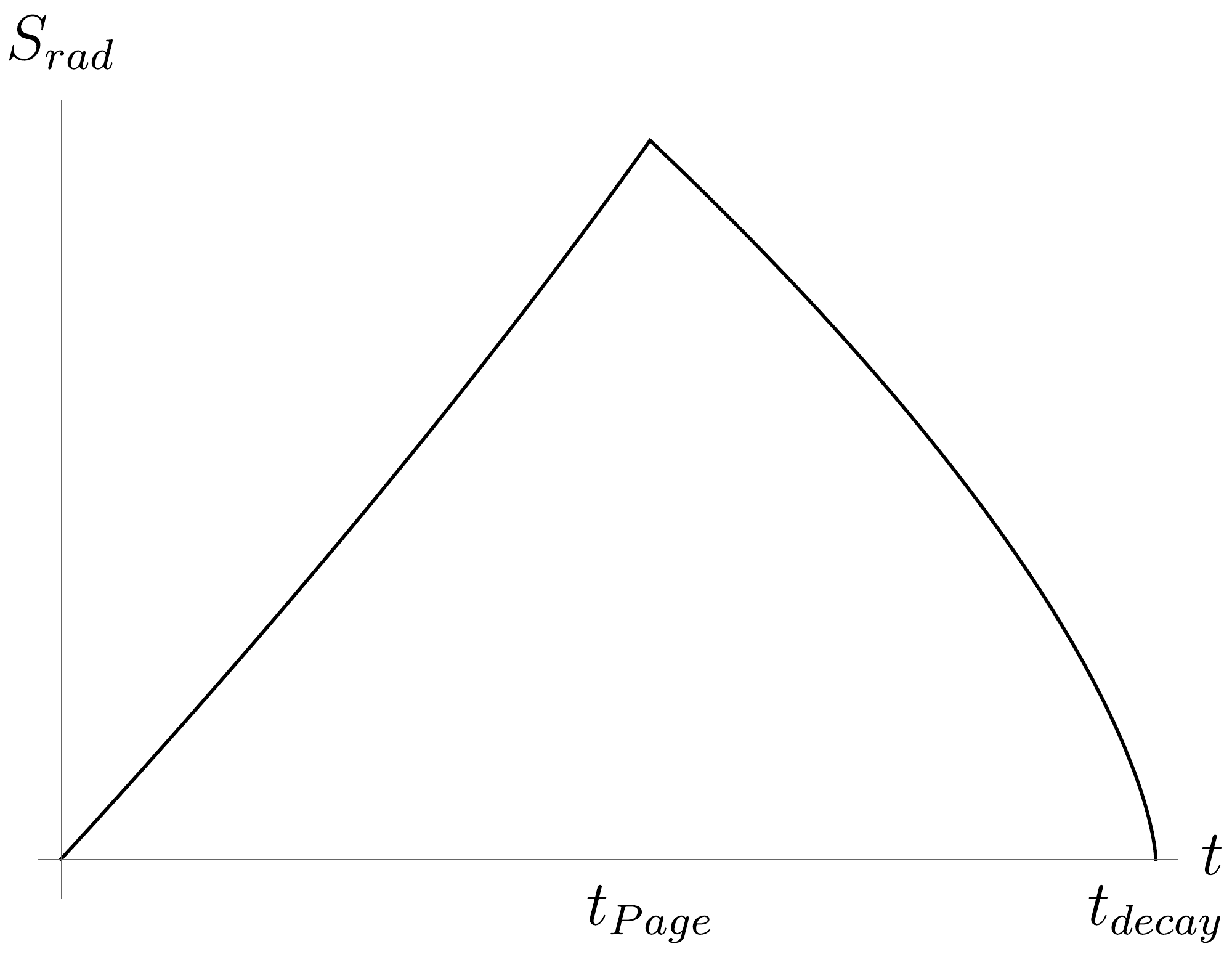}
      &
      \includegraphics[width=0.33\textwidth]{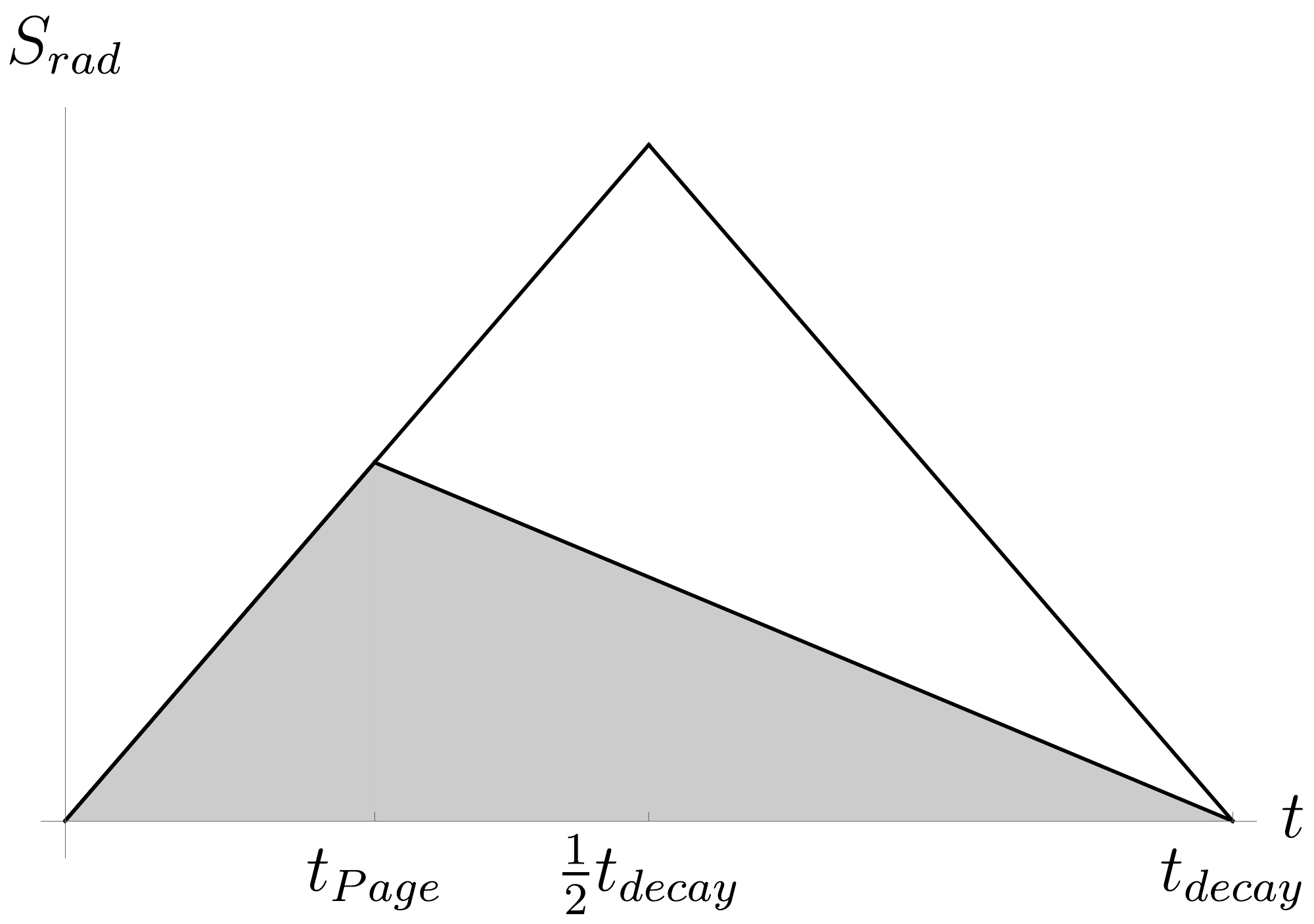}\\    
      \small a) Page Curve &  b) Includes black hole physics & c) Page-like curves\\
\end{tabular}
\caption{}\label{fig1}
\end{figure}

However, in general the Page Curve is not symmetric because the entropy increase of the outgoing radiation, $dS_b/dt$ is not equal to the energy decrease of the black hole, $-dS_{BH}/dt$.  This is due to the different greybody factors, helicities and particle numbers available for massless emission that affect the black hole entropy loss and Hawking radiation entropy differently.  The corresponding Page Curve is shown in Figure \ref{fig1}b. In \cite{Page-3}, using previous scattering calculations, Page showed that 
\[ \beta \equiv \frac{ \; \ dS_b/dt}{-dS_{BH}/dt} \approx 1.484 \ 72 \] 
He also showed that the Page Time, $t_{Page}$ is related to the black hole lifetime $t_{decay}$, as 
\[ t_{Page} = \left[1- \left(\frac{\beta}{\beta +1} \right )^{3/2} \right ]   t_{decay} \approx 0.53 \ t_{decay}  \quad \textrm{for} \quad \beta = 1.484 \ 72.
\]
Thus, the Page Time is heavily dependent on $\beta$.  

    Our result in (\ref{label10}) generates the envelope of any Page-like curve for the evaporation via Bell pair emission model that we have been considering. The monotonically increasing part is bounded by a straight line with a slope equal to the maximum of the upper bound in (\ref{label10}).  This is equal to $\log 2$. Similarly, the decreasing part is also bounded by a straight line and the slope of this line is obtained from the minimum of the lower bound in (\ref{label10}). 
\begin{eqnarray*}
         \text{The lower bound} &=& 1 - 4\epsilon_2^2 - \sqrt{1-\gamma^2} \\
          &=& 1 - 4\epsilon_2^2 - \sqrt{4\epsilon^2 (1 - \epsilon^2) - 4\epsilon_2^2}.
\end{eqnarray*}
    For any value of $\epsilon_2$, this quantity is minimized when $\epsilon^2(1-\epsilon^2)$ is maximized. This occurs when $\epsilon^2(1-\epsilon^2) = \frac{1}{4}$. Therefore, we need to consider the minimum value of the quantity $1 - 4\epsilon_2^2 - \sqrt{1 - 4\epsilon_2^2}$, which is $-\frac{1}{4}$. The Page-like curve that our evaporation model generates is bounded by the grey region in Figure \ref{fig1}c.  Note, in drawing the graph in Figure \ref{fig1}c, we first fixed the evaporation time $t_{decay}$. Then we imposed the two bounds. 
    

We will take the intersection of both bounds in  Figure \ref{fig1}c to be the Page Time in this model.  This occurs at $t_{Page} = 0.2 \ t_{decay}$ and corresponds to $\beta = 6.24$.  This corresponds to $\beta = 6.24$ Such an early Page Time would imply that information quickly comes out of the black hole because the emitted photons/gravitons in the outgoing radiation possess much more entropy -- by a factor of $6.24$ --  than the entropy decrease of the black hole. This seems farfetched and would seem to indicate that evaporation via only Bell pair states doesn't correspond to the physical picture of evaporation in Page's calculations.  It also implies that maximal entanglement between the outgoing radiation and the hole doesn't last very long.  

This picture is at odds with Page's description of unitary black hole evaporation.  In Page's description the Page Time is near the half-way point in the evaporation process. General arguments regarding subsystem entropy transfer support the belief that $t_{Page} \approx \frac{1}{2} t_{decay}$ and reasonable estimates of $\beta$ also support this. Adami and Bradler also found modifications to the Page Curve in their study of black hole dynamics using a trilinear Hamiltonian \cite{adami}. In retrospect, the early turnover of the envelope of our Page Curve may not be surprising because a non-negligible $\epsilon$ and $\epsilon_2$ will presumably create stronger interactions and cause the Page Curve's envelope to turn over more quickly.

Giddings and Shi \cite{Giddings:2012dh} generalized Mathur's model and showed that corrections via Bell pair states only, would not restore unitarity.  Our Page Curve puts contraints on Bell pair emission that are unlikely to be met and thus reaffirms Giddings and Shi's, and Mathur's earlier claim that Bell pairs don't restore unitarity.   

\section{Conclusion} \label{sec-9}

Mathur's analysis of `small' corrections to the leading order Hawking analysis showed that Bell Pair states do not unitarize the black hole evaporation process. In this paper, we reaffirm this claim by analyzing a toy qubit model that incorporates a generalized quantum correlation between successive pairs. We then generalize Mathur's work by relaxing the `smallness' condition. We establish a rigorous nontrivial upper and lower bound for the change in entanglement entropy. This enables us to parameterize the required correction to the Hawking state.  Our results show that the correction required to restore unitarity is `not-so-small' or even `large'. However, even if we allow such an evaporation law, we find that it is at odds with the expected Page Curve. If a black hole's time evolution is to be unitary, the entropy of its outgoing radiation should approximate the expected Page curve. Thus, this leads us to reaffirm the belief that the information paradox cannot be resolved by adhering to the picture of an evaporation process via only Bell Pair states.

\section*{Acknowledgements}
We would like to thank Samir Mathur for helpful suggestions and comments. M.H.R. acknowledges partial support from the U.S. Department of Energy under Grant No. DE-SC0010296. A.R. acknowledges partial support from the U.S. Department of Energy, Office of High Energy Physics under Grant No. DE‐SC0007890

\bibliography{correlation}
\bibliographystyle{apsrev4-1}

\end{document}